\newif\ifpublic
\numberwithin{equation}{section}
\renewcommand{\eqref}[1]{\hyperref[#1]{(\ref*{#1})}}
\newtheorem{theorem}{Theorem}
\numberwithin{theorem}{section}
\newtheorem{definition}[theorem]{Definition}
\newtheorem{lemma}[theorem]{Lemma}
\newtheorem{prop}[theorem]{Proposition}
\newtheorem*{remark}{Remark}
\newcommand{\R}{\mathbb{R}}
\newcommand{\suchthat}{\;:\;}
\newcommand{\indic}{\mathbb{I}}
\newcommand{\exx}{\mathcal{X}}
\newcommand{\abs}[1]{\left|#1\right|}
\newcommand{\size}[1]{\left|#1\right|}
\newcommand{\norm}[1]{\left\|#1\right\|}
\newcommand{\fnorm}[1]{\left\|#1\right\|_{F}}
\newcommand{\tr}[1]{\operatorname{Tr}\left(#1\right)}
\newcommand{\inner}[2]{\left\langle#1, #2\right\rangle}
\newcommand{\sr}[1]{\operatorname{sr}\left(#1\right)}
\newcommand{\rk}[1]{\operatorname{rank}\left (#1\right)}
\newcommand{\sparsestcut}{\textsc{Sparsest Cut}\xspace}
\newcommand{\uniformSparsestcut}{\textsc{Uniform Sparsest Cut}\xspace}
\title{Embedding approximately low-dimensional $\ell_2^2$ metrics into $\ell_1$}
\author{Amit Deshpande\footnote{Microsoft Research India, {\tt amitdesh@microsoft.com}} \and Prahladh Harsha\footnote{Tata Institute of Fundamental Research, {\tt prahladh@tifr.res.in}} \and Rakesh Venkat\footnote{Tata Institute of Fundamental Research, {\tt rakesh@tifr.res.in}}}
\date{}
\begin{document}

\begin{titlepage}
\thispagestyle{empty}

\maketitle

\begin{abstract}
  Goemans showed that any $n$ points $x_1, \dotsc x_n$ in
  $d$-dimensions satisfying $\ell_2^2$ triangle inequalities can be
  embedded into $\ell_{1}$, with worst-case distortion at most
  $\sqrt{d}$. We extend this to the case when the points are {\em approximately}
  low-dimensional, albeit with {\em average distortion} guarantees.  More precisely, we give an $\ell_{2}^{2}$-to-$\ell_{1}$
  embedding with average distortion at most the \emph{stable
    rank}, $\sr{M}$, of the matrix $M$ consisting of columns
  $\{x_i-x_j\}_{i<j}$. Average distortion embedding suffices for
  applications such as the \sparsestcut problem. Our embedding gives
  an approximation algorithm for the \sparsestcut problem on low
  threshold-rank graphs, where earlier work was inspired by Lasserre
  SDP hierarchy, and improves on a previous result of the first and third
  author [Deshpande and Venkat, {In \em Proc. 17th APPROX}, 2014]. Our ideas give a new perspective on
  $\ell_{2}^{2}$ metric, an alternate proof of Goemans' theorem, and a
  simpler proof for \emph{average} distortion $\sqrt{d}$. Furthermore, while the seminal result of Arora, Rao and Vazirani giving a $O(\sqrt{\log n})$ guarantee for $\uniformSparsestcut$ can be seen to  imply Goemans' theorem with \emph{average} distortion, our work opens up the possibility of proving such a result directly via a Goemans'-like theorem.
\end{abstract}
\end{titlepage}

%% Introduction, Related work and results
\section{Introduction}
A finite metric space consists of a pair $(\exx, d)$, where $\exx$ is a finite set of points, and $d:\exx \times \exx \rightarrow \R_{\geq0}$ is a distance function on pairs of points in $\exx$. Finite metric spaces arise naturally in combinatorial optimization (e.g., the $\ell_1$ space in cut problems), and in practice (e.g., edit-distance between strings over some alphabet $\Sigma$). Since the input space may not be amenable to
efficient optimization, or may not admit efficient algorithms, one looks for \emph{embeddings} from these input spaces to easier spaces, while minimizing the \emph{distortion} incurred.  Given its importance, various aspects of such embeddings have been investigated such as dimension, distortion, efficient algorithms, and hardness results (refer to surveys~\cite{IndykM2004,Matousek2002,Linial2002} and references therein). In this paper, we provide better distortion guarantees for embedding \emph{approximately} low-dimensional points in the $\ell_2 ^2$-metric into $\ell_1$, and give applications to the \sparsestcut problem.

In the \sparsestcut problem, we are given graphs $C$, $D$ on the same vertex set $V$, with $\size{V} = n$, called the \emph{cost} and \emph{demand} graphs, respectively. They are specified by non-negative edge weights $c_{ij}, d_{ij} \geq 0$, for $i<j \in [n]$ and the \emph{(non-uniform) sparsest cut} problem, henceforth referred to as \sparsestcut, asks for a subset $S \subseteq V$ that minimizes
\[
\Phi(S) := \frac{\sum_{i<j} c_{ij} \abs{\indic_{S}(i) -
\indic_{S}(j)}}{\sum_{i<j} d_{ij} \abs{\indic_{S}(i) - \indic_{S}(j)}},
\]
where $\indic_{S}(i)$ is the indicator function giving $1$, if $i \in S$, and $0$, otherwise. We denote the optimum by $\Phi^{*} := \min_{S \subseteq V} \Phi(S)$. When the demand graph is a complete graph on $n$ vertices with uniform edge weights, the problem is then commonly referred to as the \uniformSparsestcut problem.

The best known (unconditional) approximation guarantee for the \uniformSparsestcut problem is $O(\sqrt{\log n})$, due to Arora, Rao and Vazirani \cite{AroraRV2009} (henceforth referred to as the ARV algorithm). Building on techniques in this work, Arora, Lee and Naor~\cite{AroraLN2008} give a $O(\sqrt{\log n} \log \log n)$ algorithm for non-uniform \sparsestcut. These  results come from a semi-definite programming (SDP) relaxation to produce solutions in the $\ell_2$-squared metric space, i.e., a set of vectors $\{x_i\}_{i\in V}$ in some high dimensional space that satisfy triangle inequality constraints on the squared distances in the following sense.
\[
\norm{x_{i}-x_{j}}_2^2 +\norm{x_{j}-x_{k}}_2^2\geq \norm{x_{i}-x_{k}}_2^{2} \qquad \forall~ i,j,k\in[n].
\]
Since the $\ell_{1}$ metric lies in the non-negative cone of cut (semi-)metrics, \cite{AroraRV2009} and \cite{AroraLN2008} round their solutions via low-distortion embeddings of the above $\ell_2^2$ solution into $\ell_1$ metric. Embeddings with low \emph{average-distortion} suffice for applications to the \sparsestcut problem.

Any $n$ points satisfying $\ell_{2}^{2}$ triangle inequalities make only acute angles among themselves, and therefore must lie in $\Omega(\log n)$ dimensions (Chapter 15, \cite{AignerZ2009}). However, for low \emph{threshold-rank} graphs, or more generally, when the $r$-th smallest generalized eigenvalue of the cost and demand graphs satisfies $\lambda_r(C,D) \gg \Phi_{SDP}$, the above SDP solution is known to be \emph{approximately} low-dimensional, that is, the span of its top $r$ eigenvectors contains nearly all of its total eigenmass (implicit in \cite{GuruswamiS2013}). Moreover, it can be embedded into $\ell_{1}$ using solutions of higher-levels of the  Lasserre SDP hierarchy to obtain a PTAS-like approximation guarantee \cite{GuruswamiS2013}. This motivates the quest for finding more efficient embeddings of low-dimensional or \emph{approximately} low-dimensional $\ell_{2}^{2}$ metrics into $\ell_{1}$.

Goemans (unpublished, appears in~\cite{MagenM2008}) showed that if the points satisfying $\ell_{2}^{2}$ triangle inequalities lie in $d$ dimensions, then they can be embedded into $\ell_2$ (and hence into $\ell_1$, since there is an isometry from $\ell_2 \hookrightarrow \ell_1$ \cite{Matousek2002}) with $\sqrt{d}$ distortion.
\begin{theorem}[{Goemans~\cite[Appendix~B]{MagenM2008}}] \label{thm:goemans-intro}
Let $x_{1}, x_{2}, \dotsc, x_{n} \in \R^{d}$ be $n$ points satisfying $\ell_{2}^{2}$
triangle inequalities. Then there exists an $\ell_{2}^{2}\hookrightarrow\ell_{2}$
embedding $x_{i} \mapsto f(x_{i})$ with distortion $\sqrt{d}$, that is,
\[
\frac{1}{\sqrt{d}}~ \norm{x_{i} - x_{j}}_{2}^{2} \leq \norm{f(x_{i}) -
f(x_{j})}_{2} \leq \norm{x_{i} - x_{j}}_{2}^{2}, \quad \forall~ i, j\in
V.
\]
\end{theorem}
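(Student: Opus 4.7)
The plan is to construct the embedding via classical \emph{multidimensional scaling (MDS)} applied to the squared-distance metric. Let $D \in \R^{n \times n}$ have entries $D_{ij} := \norm{x_i - x_j}_2^2$, let $D \circ D$ denote its Hadamard (entrywise) square, and form
\[
K \;:=\; -\tfrac{1}{2} P (D \circ D) P, \qquad P := I - \tfrac{1}{n} \mathbf{1}\mathbf{1}^T.
\]
A standard MDS calculation gives that whenever $K = Y^T Y$ for some $Y \in \R^{r \times n}$, its columns $y_i$ satisfy $\norm{y_i - y_j}_2^2 = (D\circ D)_{ij} = \norm{x_i - x_j}_2^4$, yielding an \emph{isometric} embedding $f(x_i) := y_i$ whenever $K \succeq 0$. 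When $K$ is indefinite, I would truncate to its non-negative spectrum: the resulting $f$ automatically satisfies the upper bound $\norm{f(x_i) - f(x_j)}_2 \le \norm{x_i - x_j}_2^2$, and the lower bound of the distortion inequality reduces to controlling how much the negative part of $K$ can subtract.

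The key step is thus to show that for $x_1, \dots, x_n \in \R^d$ satisfying the $\ell_2^2$ triangle inequality, the negative part of $K$ contributes at most a $(1 - 1/d)$ fraction to the ``$K$-distance'' $K_{ii} + K_{jj} - 2 K_{ij}$ for every pair. I would proceed by expanding $K$ via $\norm{x_i - x_j}_2^4 = (G_{ii} + G_{jj} - 2 G_{ij})^2$, where $G := X^T X$ is the Gram matrix of the columns of $X = [x_1 \mid \cdots \mid x_n]$, satisfying $\rk{G} \le d$ and $G \succeq 0$. This expansion splits $K$ into a PSD piece coming from the Hadamard square $G \circ G$ (PSD by the Schur product theorem) plus a collection of sign-indefinite cross terms.

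The main obstacle is controlling these cross terms using both the $\ell_2^2$ hypothesis and the dimension. By the law of cosines, $\ell_2^2$ is equivalent to the pointwise condition $\inner{x_i - x_k}{x_j - x_k} \ge 0$ for every triple $(i,j,k)$, forcing entrywise non-negativity of suitably shifted Gram matrices; combined with $\rk{G} \le d$, I expect this to suppress the indefinite part of $K$ by the factor $d$, via an eigenvalue argument exploiting the concentration of $\tr{G}$ on at most $d$ principal directions. If this deterministic route resists, a natural randomised fallback is $f(x) := (\phi_{g_1}(x), \dots, \phi_{g_N}(x))/\sqrt{N}$ with Gaussian features $\phi_g(x) := \inner{g}{x}^2$ and $g_k \sim N(0, I_d)$; Isserlis' formula gives $\expec{(\phi_g(x_i) - \phi_g(x_j))^2} = \norm{x_i - x_j}_2^2 \norm{x_i + x_j}_2^2 + 2(\norm{x_i}_2^2 - \norm{x_j}_2^2)^2$, which combined with Johnson--Lindenstrauss-style concentration and the acute-angle condition should be sandwichable by $\norm{x_i - x_j}_2^4$ up to a factor of $d$ after appropriate centering and rescaling.
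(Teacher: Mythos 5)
Your plan hinges on the claim that if you eigen-truncate $K := -\tfrac12 P(D\circ D)P$ to its non-negative part $K_+$ and take $f(x_i) := y_i$ with $K_+ = Y^TY$, then $f$ is automatically a contraction. That has the sign backwards. Since $D\circ D$ is symmetric with zero diagonal, double centering gives $(e_i-e_j)^T K (e_i-e_j) = \norm{x_i-x_j}_2^4$, and truncation yields $K_+ = K + K_-$ with $K_- \succeq 0$, so
\[
\norm{y_i - y_j}_2^2 \;=\; (e_i-e_j)^T K_+(e_i-e_j) \;=\; \norm{x_i-x_j}_2^4 \;+\; (e_i-e_j)^T K_-(e_i-e_j)\;\ge\;\norm{x_i-x_j}_2^4,
\]
i.e.\ truncated MDS is an \emph{expansion}, never a contraction. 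You would therefore have to rescale $f$ down, which pushes all the work onto the lower bound: you would need to show $(e_i-e_j)^T K_+(e_i-e_j) \le d \cdot \norm{x_i-x_j}_2^4$ for every pair $i,j$. This is exactly the ``factor $d$ suppression of the negative part'' claim, which you state as the key step but do not establish; the proposed route (Schur product theorem plus a trace-concentration heuristic using $\rk{G}\le d$) is not an argument, and it is far from clear that the $\ell_2^2$ inequality and rank bound control the pointwise spectral overshoot of $K_+$ in the needed way. Note also that $K$ depends only on the distances, not on $d$, so how the dimension enters at all is the crux and is left open.

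The proposed randomized fallback is broken by a simple example. Since $\phi_g(x) = \inner{g}{x}^2 = \phi_g(-x)$, any two antipodal points are identified by every coordinate of $f$. Take $n=2$, $x_1 = e_1$, $x_2 = -e_1$ (the $\ell_2^2$ triangle inequalities are vacuous for $n=2$, and the centroid is already $0$, so centering does not help): then $f(x_1) = f(x_2)$ deterministically and $\norm{f(x_1)-f(x_2)}_2 = 0$, while $\norm{x_1-x_2}_2^2 = 4$. No rescaling fixes a zero. This also matches your own Isserlis computation: $\expec{(\phi_g(x_i)-\phi_g(x_j))^2} = \norm{x_i-x_j}_2^2\norm{x_i+x_j}_2^2 + 2(\norm{x_i}_2^2-\norm{x_j}_2^2)^2 = 0$ here. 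For reference, the paper's proof takes a different route: it puts a minimum-volume enclosing ellipsoid $\{x : x^TQx\le 1\}$ around the difference vectors $(x_i-x_j)$, uses John's theorem (equivalently, Lagrangian duality) to write $Q^{-1} = \sum_{k<l}\alpha_{kl}(x_k-x_l)(x_k-x_l)^T$ with $\sum_{k<l}\alpha_{kl}=d$, and sets $f(x_i) := d^{-1/2}Q^{-1/2}x_i$. Contraction then follows from the key $\ell_2^2$ inner-product bound (\cref{lemma:key} and \cref{lem:contraction}), and the lower bound follows from Cauchy--Schwarz together with $(x_i-x_j)^TQ(x_i-x_j)\le 1$.
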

%
% Notice that since we have $n \leq 2^d$ for points in $\ell_2^2$ space, this result is in fact implied by the result of \cite{AroraRV} which gives a guarantee of $O(\sqrt{\log n})\leq O(\sqrt{d})$. In order to use Theorem \ref{thm:goemans} to directly give $\tilde{O}(\sqrt{\log n})$ or better approximation guarantees for the \sparsestcut problem, we need either a dimension reduction result for $\ell_{2}^{2}$ metric similar to the Johnson-Lindenstrauss lemma or we need to generalize it to work with approximate $\ell_{2}^{2}$ triangle inequalities. A dimension reduction for $\ell_{2}^{2}$ that approximately preserves all pairwise $\ell_{2}^{2}$ distances is ruled out by \cite{MagenM2008}, although it does not rule out dimension reduction for \emph{average} distortion. The statement of Theorem \ref{thm:goemans} also holds with approximate $\ell_{2}^{2}$ triangle inequalities but the only proof that we are aware of\cite{Trevisan15} uses \cite{AroraRV2009}. \amit{Appropriately cite Luca Trevisan's communication, with or without details.} The proof of \ref{thm:goemans} cited in \cite[Appendix~B]{MagenM2008} is also not amenable to the case when the SDP solution is \emph{approximately} low-dimensional either.

\paragraph{Comparison of Goemans and ARV:} Since $n$ points satisfying $\ell_{2}^{2}$ triangle inequalities must lie in $d = \Omega(\log n)$ dimensions (Chapter 15, \cite{AignerZ2009}), the ARV algorithm \cite{AroraRV2009} implies an $\ell_{2}^{2}\hookrightarrow\ell_{1}$ embedding with \emph{average} distortion $O(\sqrt{d})$, and Arora-Lee-Naor \cite{AroraLN2008} improve it to $\tilde{O}(\sqrt{d})$ worst-case distortion. In the other direction, is it possible to extend \cref{thm:goemans-intro} to give ARV-like guarantees? Here are two immediate ideas that come to mind.
\begin{itemize}
\item Combine \cref{thm:goemans-intro} with a dimension reduction to $O(\log n)$ dimensions for $\ell_{2}^{2}$ metrics, similar to the Johnson-Lindenstrauss lemma for $\ell_{2}$. Such a dimension reduction for $\ell_{2}^{2}$ that approximately preserves all pairwise $\ell_{2}^{2}$ distances is ruled out by Magen and Moharrami~\cite{MagenM2008}, although their proof does not rule out dimension reduction for \emph{average} distortion.
\item Extend \cref{thm:goemans-intro} to work with approximate $\ell_{2}^{2}$ triangle inequalities, and then combine it with the Johnson-Lindenstrauss lemma. The Johnson-Lindenstrauss lemma, when applied to points satisfying $\ell_{2}^{2}$ triangle inequalities, preserves their $\ell_2^2$ triangle inequalities only approximately. That is, the points after the Johnson-Lindenstrauss random projection satisfy
    \[
    \norm{x_{i}-x_{j}}_2^2 + \norm{x_{j}-x_{k}}_2^2\geq \left(1- O(\epsilon)\right) \norm{x_{i}-x_{k}}_2^{2} \qquad \forall~ i, j, k \in [n].
    \]
    In fact, a generalization of \cref{thm:goemans-intro} that accommodates approximate $\ell_{2}^{2}$ triangle inequalities does hold, but its only proof (due to Trevisan~\cite{Trevisan15}) that we are aware of uses the technical core of the analysis of the ARV algorithm.
\end{itemize}

%Furthermore, the proof of Theorem~\ref{thm:goemans-intro} stated in \cite[Appendix~B]{MagenM2008} is not amenable to the case when the SDP %solution is \emph{approximately} low-dimensional either. Hence, Theorem~\ref{thm:goemans} by itself does not seem useful.

Here we seek a robust generalization of Goemans' theorem that avoids the above caveats. Our generalization of Goemans' theorem uses \emph{average} distortion instead of worst-case. It is also robust in the sense that it works with \emph{approximate} dimension instead of the actual dimension. Such a robust generalization opens up another possible approach to the general \sparsestcut problem: reduce the \emph{approximate} dimension while preserving the pairwise distances on \emph{average}, and then apply the robust version of Goemans' theorem. Moreover, our definition of the \emph{approximate} dimension is spectral, and our results can be easily compared to those of Guruswami-Sinop \cite{GuruswamiS2013} on Lasserre SDP hierarchies and Kwok et al.~\cite{KwokLLGT2013} on higher order Cheeger inequalities (see \cref{sec:results,subsec:related-works} for comparisons).

% In fact, \emph{average} distortion embeddings suffice for applications
% such as the \sparsestcut problem, so we seek a robust generalization
% of Goemans' theorem with \emph{average} distortion instead of
% worst-case and an \emph{approximate} dimension instead of the actual
% dimension. This also opens up a possible approach to the \sparsestcut
% problem in general, if we could reduce the \emph{approximate}
% dimension while preserving the pairwise distances on an
% \emph{average}.

\subsection{Our Results}\label{sec:results}
We prove a robust version of Goemans' theorem, when the points $x_{1}, x_{2}, \dotsc, x_{n}$ are only \emph{approximately} low-dimensional. We quantify this \emph{approximate} dimension by the \emph{stable rank} of the difference matrix $M \in \R^{d \times {n \choose 2}}$ having columns $\{x_i - x_j\}_{i<j}$. Stable rank of the difference matrix is a natural choice because \begin{inparaenum}[(a)] \item stable rank is a continuous proxy for rank or dimension arising naturally in many applications \cite{BourgainT1987,Tropp2009}, \item the difference matrix $M$ is invariant under any shift of origin, and \item the difference matrix of the SDP solution for the \sparsestcut problem on \emph{low threshold-rank} graphs indeed has low stable rank (implicit in \cite{GuruswamiS2013}). \end{inparaenum}

%\footnote{ The choice of stable rank of the difference matrix, rather than a matrix with columns $\{x_i\}_{i \in [n]}$ is natural, since the difference matrix is invariant to a shift of origin.}

%%% Stable Rank r
\begin{definition}[Stable Rank]
Given $x_1, \ldots, x_n \in \R^d$, let $M \in \R^{d \times {n \choose 2}}$ be the matrix with columns
$\{x_i - x_j\}_{i<j}$. The stable rank of the points is defined as the
stable rank of $M$, given by~ $\sr{M} := \fnorm{M}^2 / \norm{M}_2^2$, where $\fnorm{M}$ and $\norm{M}_2$ are the Frobenius and spectral norm
of $M$ respectively.
\end{definition}
Note that $\sr{M} \leq \rk{M} \leq d$, when the points $x_{1}, x_{2}, \dotsc, x_{n} \in \R^{d}$. Our robust generalization of Goemans' theorem is as follows.
\begin{theorem}[Embedding almost low-dimensional vectors]\label{thm:stable-rank}
Let $x_{1}, x_{2}, \dotsc, x_{n} \in \R^{d}$ be $n$ points satisfying
$\ell_{2}^{2}$ triangle inequalities. Then there exists an
$\ell_{2}^{2}\hookrightarrow\ell_{2}$ embedding $x_{i} \mapsto h(x_{i})$ with
\emph{average} distortion bounded by the stable rank of $M$, that is,
\begin{align*}
\norm{h(x_{i}) - h(x_{j})}_{2} & \leq \norm{x_{i} - x_{j}}_{2}^{2}, \quad
\forall i, j\in V, \\
\text{and} \qquad \frac{1}{\sr{M}}~ \sum_{i<j} \norm{x_{i} - x_{j}}_{2}^{2} &
\leq \sum_{i<j} \norm{h(x_{i}) - h(x_{j})}_{2}.
\end{align*}
\end{theorem}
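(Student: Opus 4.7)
The plan is to exploit the $\ell_{2}^{2}$ triangle inequality by using \emph{squared distances to the input points themselves} as the embedding coordinates, and then to analyze the resulting sum of image-lengths via the spectral structure of $M$. For each $k\in[n]$, define $\varphi_{k}\colon\R^{d}\to\R$ by $\varphi_{k}(x):=\norm{x-x_{k}}_{2}^{2}$. The $\ell_{2}^{2}$ triangle inequality applied to the triple $\{x_{i},x_{j},x_{k}\}$ gives $\abs{\varphi_{k}(x_{i})-\varphi_{k}(x_{j})}\leq\norm{x_{i}-x_{j}}_{2}^{2}$; i.e., each $\varphi_{k}$ is $1$-Lipschitz with respect to the $\ell_{2}^{2}$ pseudometric on $\{x_{1},\ldots,x_{n}\}$. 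Packaging these into the vector-valued map $h\colon\R^{d}\to\R^{n}$ with $h(x)_{k}:=\varphi_{k}(x)/\sqrt{n}$, the coordinatewise bound, together with the normalization $1/\sqrt{n}$, immediately yields the upper bound $\norm{h(x_{i})-h(x_{j})}_{2}\leq\norm{x_{i}-x_{j}}_{2}^{2}$ required by the theorem.

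For the average lower bound, I would begin with the algebraic identity $\varphi_{k}(x_{i})-\varphi_{k}(x_{j})=\inner{x_{i}-x_{j}}{x_{i}+x_{j}-2x_{k}}$, shift the origin to the centroid so that $\sum_{k}x_{k}=0$ (whence $MM^{\top}=n\sum_{k}x_{k}x_{k}^{\top}$), and compute $\sum_{i<j}\norm{h(x_{i})-h(x_{j})}_{2}^{2}$ in closed form. The dominant contribution is of the order $\tr{(MM^{\top})^{2}}/n^{2}=\tfrac{1}{n^{2}}\sum_{\ell}\sigma_{\ell}(M)^{4}$, which is controlled from below in terms of $\norm{M}_{2}$ and $\fnorm{M}$, the two quantities whose ratio is $\sr{M}$.

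The final step converts from the sum of \emph{squared} image lengths to the sum of image lengths, using the elementary inequality
\[\sum_{i<j}\norm{h(x_{i})-h(x_{j})}_{2}\;\geq\;\frac{\sum_{i<j}\norm{h(x_{i})-h(x_{j})}_{2}^{2}}{\max_{i<j}\norm{h(x_{i})-h(x_{j})}_{2}},\]
together with the pointwise upper bound $\max_{i<j}\norm{h(x_{i})-h(x_{j})}_{2}\leq\max_{i,j}\norm{x_{i}-x_{j}}_{2}^{2}$ already established. The resulting ratio should collapse to $\fnorm{M}^{2}/\sr{M}=\norm{M}_{2}^{2}$, giving the claimed average distortion of $\sr{M}$.

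The main obstacle is the tightness of this last step: the crude denominator $\max_{i,j}\norm{x_{i}-x_{j}}_{2}^{2}$ can lose more than the desired factor of $\sr{M}$ when the top singular direction of $M$ is not aligned with the maximum distance pair. To fix this, I would either sharpen the denominator using a spectral quantity tied to the top singular direction of $M$, or augment $h$ with a single coordinate that projects onto the top left singular vector $u_{1}$ of $M$; such a coordinate contributes exactly $\sum_{i<j}\inner{x_{i}-x_{j}}{u_{1}}^{2}=\norm{M}_{2}^{2}$ to the sum of squared image distances, which should suffice to close the Cauchy--Schwarz gap and yield $\sr{M}$ as the average distortion.
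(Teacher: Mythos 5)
Your proposal does not follow the paper's argument, and as sketched it has a genuine gap that the hinted fix does not clearly close. The upper bound is fine: the Fr\'echet-type coordinates $\varphi_k(x)=\norm{x-x_k}_2^2$, normalized by $1/\sqrt{n}$, are each $1$-Lipschitz for the $\ell_2^2$ pseudometric, giving contraction. The lower bound is where things break down. After centering, your second moment computation gives roughly $\sum_{i<j}\norm{h(x_i)-h(x_j)}_2^2 \gtrsim \tfrac{1}{n^2}\sum_\ell \sigma_\ell(M)^4 \geq \tfrac{1}{n^2}\norm{M}_2^4$, and you propose to divide by $\max_{i<j}\norm{h(x_i)-h(x_j)}_2 \leq \max_{i,j}\norm{x_i-x_j}_2^2$. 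This requires $\norm{M}_2^2 \gtrsim n^2\max_{i,j}\norm{x_i-x_j}_2^2$, which is false in general: since $\norm{M}_2^2 \leq \fnorm{M}^2 \leq \binom{n}{2}\max\norm{x_i-x_j}_2^2$, the inequality already fails unless $\sr{M}=O(1)$. This is not a matter of slack in constants; the sum-of-squares-over-max trick is intrinsically too weak here because the maximum pair distance has no a priori relationship to $\sigma_1/n$.

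The fix you float --- augmenting $h$ with the coordinate $x\mapsto c\inner{x}{u_1}$ --- is heading in the right direction but is not carried through, and the missing pieces are precisely what the paper's proof supplies. First, adding a positive coordinate to $h$ destroys the contraction $\norm{h(x_i)-h(x_j)}_2\leq\norm{x_i-x_j}_2^2$ unless you rescale, and determining the correct scale $c$ for that extra coordinate already requires the key $\ell_2^2$ inequality of the paper (Lemma 2.1, $\abs{\inner{x_i-x_j}{x_k-x_l}}\leq\norm{x_i-x_j}_2^2$), applied via $\sigma_1 u_1 = Mv = \sum_{k<l}v_{kl}(x_k-x_l)$: this forces $c=\sigma_1/\norm{v}_1$, a normalization your outline never identifies. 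Second, there is no ``Cauchy--Schwarz gap'' to close: the observation $u_1^TM=\sigma_1 v^T$ gives $\inner{x_i-x_j}{u_1}=\sigma_1 v_{ij}$ \emph{exactly}, so $\sum_{i<j}\abs{\inner{x_i-x_j}{u_1}}=\sigma_1\norm{v}_1$ with no need to pass through squared sums at all. Multiplying by $c=\sigma_1/\norm{v}_1$ gives exactly $\sigma_1^2=\fnorm{M}^2/\sr{M}$. This is the paper's Theorem~3.2 and is essentially the only place where the claimed bound materializes. The Fr\'echet coordinates $\varphi_k$, once this one projective coordinate is correctly set up, contribute nothing to the argument and can be discarded. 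So the core ideas that make the theorem go through --- Lemma~2.1 for contraction, and the exact identity $\inner{x_i-x_j}{u_1}=\sigma_1 v_{ij}$ to avoid a lossy $\ell_2\to\ell_1$ conversion --- are both absent from your write-up.
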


Our proof technique gives a new perspective on $\ell_{2}^{2}$ metric, an alternate proof of Goemans' theorem, and a simpler proof for average distortion $\sqrt{d}$ based on a squared-length distribution (see \cref{sec:goemans}). Also, the result can be quantitatively compared to guarantees given by higher-order Cheeger inequalities \cite{KwokLLGT2013}; we discuss this in more detail at the end of this section. While most known embeddings from $\ell_2 ^2 $ to $\ell_1$ are Frechet embeddings, our embedding is projective (similar in spirit to \cite{GuruswamiS2013, DeshpandeV2014}). To obtain a truly robust version of Goemans' theorem
quantitatively, one might ask if the dependence on $\sr{M}$ in the
above theorem can be improved from $\sr{M}$ to
$\sqrt{\sr{M}}$.

\Cref{thm:stable-rank} immediately implies an $\sr{M}$-approximation to the \uniformSparsestcut
problem. In fact, with a slight modification, we obtain a similar
result for the general \sparsestcut problem (see theorem below).

\begin{theorem} \label{thm:SparsestCut}
There is an $r/\delta$-approximation algorithm for \sparsestcut instances $C,D$ satisfying $\lambda_r(C,D) \geq \Phi_{SDP}
/(1-\delta)$, where $\lambda_r(C,D)$ is the $r$-th smallest generalized eigenvalue of the cost and demand graphs.
\end{theorem}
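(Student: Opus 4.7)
The plan is to reduce the SparsestCut instance on $(C,D)$ with the spectral-gap hypothesis $\lambda_r(C,D) \ge \Phi_{SDP}/(1-\delta)$ to an essentially $r$-dimensional $\ell_2^2$ rounding problem, and then invoke \cref{thm:stable-rank}. First I would solve the standard $\ell_2^2$-SDP relaxation of SparsestCut to obtain vectors $x_1, \dotsc, x_n \in \R^d$ satisfying the $\ell_2^2$ triangle inequalities with $\Phi_{SDP} = \big(\sum_{i<j} c_{ij}\|x_i-x_j\|^2\big) / \big(\sum_{i<j} d_{ij}\|x_i-x_j\|^2\big) \le \Phi^*$. By the standard SDP-rounding reduction for SparsestCut, any $\ell_2^2 \hookrightarrow \ell_1$ embedding of the $x_i$'s with $(C,D)$-weighted average distortion at most $K$ yields a cut of ratio at most $K \cdot \Phi_{SDP} \le K \cdot \Phi^*$, so it suffices to produce such an embedding with $K \le r/\delta$.

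Second, I spectrally split $x_i = \tilde x_i + y_i$. Let $v_1, v_2, \dotsc$ be the generalized eigenvectors of $L_C v = \lambda L_D v$, chosen $L_D$-orthonormal with eigenvalues $\lambda_1 \le \lambda_2 \le \dotsb$. For each coordinate $x^{(k)} \in \R^n$, let $\tilde x^{(k)}$ be the $L_D$-projection onto $\mathrm{span}(v_1, \dotsc, v_{r-1})$ and $y^{(k)} = x^{(k)} - \tilde x^{(k)}$. Because $y^{(k)}$ lies in the span of eigenvectors with eigenvalue $\ge \lambda_r$,
\[
\sum_{i<j} c_{ij}\|y_i-y_j\|^2 \;\ge\; \lambda_r(C,D)\sum_{i<j} d_{ij}\|y_i-y_j\|^2 \;\ge\; \frac{\Phi_{SDP}}{1-\delta}\sum_{i<j} d_{ij}\|y_i-y_j\|^2.
\]
The $L_C$-orthogonality of $\tilde x$ and $y$ gives $\sum c_{ij}\|x_i-x_j\|^2 = \sum c_{ij}\|\tilde x_i-\tilde x_j\|^2 + \sum c_{ij}\|y_i-y_j\|^2$, so $\sum c_{ij}\|y_i-y_j\|^2 \le \Phi_{SDP}\sum d_{ij}\|x_i-x_j\|^2$, and chaining with the display forces $\sum d_{ij}\|y_i-y_j\|^2 \le (1-\delta)\sum d_{ij}\|x_i-x_j\|^2$, whence $\sum d_{ij}\|\tilde x_i-\tilde x_j\|^2 \ge \delta \sum d_{ij}\|x_i-x_j\|^2$. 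So the projected vectors $\tilde x_i$ preserve at least a $\delta$-fraction of the demand at no extra cost, giving $(C,D)$-ratio at most $\Phi_{SDP}/\delta$.

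Third, the vectors $\tilde x_i = \sum_{\ell \le r-1} v_\ell(i)\,a_\ell$ for fixed directions $a_\ell \in \R^d$ lie in an $(r-1)$-dimensional subspace, so the stable rank of their difference matrix is at most $r-1 \le r$. Applying a $d_{ij}$-weighted version of \cref{thm:stable-rank} to the $\tilde x_i$'s would produce an $\ell_2^2 \hookrightarrow \ell_1$ embedding of the $\tilde x_i$'s with weighted average distortion at most $r$, and composing the two factors yields a cut of ratio at most $(r/\delta) \cdot \Phi_{SDP}$, which is the desired bound.

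The main obstacle is that the projected vectors $\tilde x_i$ need \emph{not} satisfy the $\ell_2^2$ triangle inequality, so \cref{thm:stable-rank} cannot be invoked on them as a black box. The natural fix, exploiting that the embedding in \cref{thm:stable-rank} is projective rather than Frechet, is to use the low-rank $\tilde x_i$ only to supply the random projection directions (which is where the stable-rank bound enters), while the pairwise upper/lower bounds in the rounding analysis are charged to the original $x_i$, for which triangle inequality does hold. Ensuring that this hybrid rounding is tight enough to give exactly $r/\delta$ and not, say, $(r/\delta)^2$ or $r\sqrt{1/\delta}$, is the delicate part of the argument.
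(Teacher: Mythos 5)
You have correctly identified the central obstruction yourself, and it is fatal to the two-step plan as written: the spectrally-projected vectors $\tilde x_i$ do not satisfy the $\ell_2^2$ triangle inequalities, so neither \cref{lemma:key} nor \cref{lem:contraction} applies to them and \cref{thm:stable-rank} cannot be invoked on them as a black box. Your proposed hybrid fix --- draw the projection direction $u$ from the low-dimensional $\tilde x_i$'s while charging the contraction bounds to the original $x_i$'s --- does not close the gap on its own: the contraction estimate $\abs{\inner{x_i - x_j}{u}} \le \norm{x_i - x_j}_2^2$ hinges on \cref{lemma:key}, which requires $u$ to decompose, with small $\ell_1$-normalization, over differences $x_k - x_l$ of the \emph{original} $\ell_2^2$ point set, not over the $\tilde x_k - \tilde x_l$'s. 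Re-expressing such a $u$ over the original differences with normalization bounded by $r/\delta$ is exactly the quantitative content you are missing, and it is not supplied by the ``the delicate part'' remark.

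The paper never projects the point set at all. It keeps the $x_i$'s unchanged --- so the $\ell_2^2$ triangle inequalities always hold --- and instead bounds the \emph{stable rank} of the demand-weighted difference matrix $M$ with columns $\{\sqrt{d_{kl}}(x_k - x_l)\}_{k<l}$. By the Von Neumann inequality of Guruswami--Sinop (\cref{prop:laplacian}), the hypothesis $\lambda_r \ge \Phi_{SDP}/(1-\delta)$ already forces $\sr{M} \le r/\delta$ in a single step, folding your separate factors of $r$ and $1/\delta$ into one spectral quantity rather than composing two lossy reductions. One then applies the one-dimensional embedding of \cref{thm:stable-rank-2} along the top singular direction $u$ of $M$; since $\sigma_1 u = \sum_{kl} v_{kl}\sqrt{d_{kl}}(x_k - x_l)$ is a combination of the \emph{original} differences, \cref{lemma:key} gives contraction, and the same calculation gives $d$-weighted average distortion at most $\sr{M} \le r/\delta$. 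Your split $x = \tilde x + y$ is, morally, the proof of \cref{prop:laplacian}; the move you are missing is to convert that spectral information into a stable-rank bound on $M$ rather than into a literal dimension reduction of the point set.
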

The precondition on $\lambda_{r}(C, D)$ is the same as in previous works~\cite{GuruswamiS2013,DeshpandeV2014}, and we improve the $O(r/\delta^{2})$-approximation of \cite{DeshpandeV2014} by a factor of $1/\delta$. Our proof follows from the robust version of Goemans' embedding into $\ell_{2}$ whereas these previous works gave embeddings directly into $\ell_{1}$ by either using higher levels of Lasserre explicitly \cite{GuruswamiS2013} or using only the basic SDP solution but inspired by the properties of Lasserre vectors \cite{DeshpandeV2014}.

%We first\footnote{All our results are in terms of
%$\ell_2^2\hookrightarrow \ell_2$ embeddings. Since $\ell_2$
%isometrically maps into $\ell_1$, this also implies an embedding
%into $\ell_1$ with the same distortion.},

%While technically, this might seem to be weaker than Goemans'
%worst-case distortion result, the advantage is that it generalizes
%naturally to the case of ``approximately'' low-dimensional
%vectors. Further, it does not involve the computation of the MVEE of
%the pointset as in Goemans, hence is efficient and arguably cleaner.

% %%% Goemans with \sqrt d distortion
% \ph{Added ``efficiently computable'' to distinguish it from
%   \lref[Theorem]{thm:goemans}. Should we elaborate?}
% \begin{theorem}[Efficient Goemans' embedding for average distortion]\label{thm:average}
% Let $x_{1}, x_{2}, \dotsc, x_{n} \in \R^{d}$ be $n$ points satisfying
% $\ell_{2}^{2}$
% triangle inequalities. Then there exists an efficiently computable $\ell_{2}^{2}\hookrightarrow\ell_{2}$
% embedding $x_{i} \mapsto g(x_{i})$ with \emph{average} distortion $\sqrt{d}$,
% that is,
% \begin{align*}
% \norm{g(x_{i}) - g(x_{j})}_{2} & \leq \norm{x_{i} - x_{j}}_{2}^{2}, \quad
% \forall i, j\in V, \\
% \text{and} \qquad \frac{1}{\sqrt{d}}~ \sum_{i<j} \norm{x_{i} - x_{j}}_{2}^{2} &
% \leq \sum_{i<j} \norm{g(x_{i}) - g(x_{j})}_{2}.
% \end{align*}
% \end{theorem}
%\todo[inline]{Add references for other places where stable rank arises}

\subsection{Related work} \label{subsec:related-works}
We recall that the best known upper bound for the worst-case distortion of
embedding $\ell_2^2 \hookrightarrow \ell_1$ is $O(\sqrt{\log n} \cdot \log \log n)$~\cite{AroraRV2009,AroraLN2008}, 
while the best known lower bound is $(\log n)^{\Omega(1)}$ for worst-case distortion~\cite{CheegerKN2009}, and
$\exp(\Omega(\sqrt{\log \log n}))$ for average distortion~\cite{KaneM2013}. Guarantees to \sparsestcut on \emph{low threshold-rank} graphs were obtained using higher levels of the Lasserre hierarchy for SDPs~\cite{BarakRS2011,GuruswamiS2013}. In contrast, a previous work of the first and third author~\cite{DeshpandeV2014} showed weaker guarantees, but using just the basic SDP relaxation. Oveis Gharan and Trevisan \cite{GharanT2013} also give a rounding algorithm for the basic SDP relaxation on low-threshold rank graphs, but require a stricter pre-condition on the eigenvalues ($\lambda_r \gg \log^{2.5} r \cdot \Phi(G)$), and leverage it to give a stronger $O(\sqrt{\log r})$-approximation guarantees. Their improvement comes from a new structure theorem on the SDP solutions of low threshold-rank graphs being clustered, and using the techniques in ARV for analysis.

Kwok~et al.~\cite{KwokLLGT2013} showed that a better analysis of Cheeger's inequality gives a $O(r \cdot \sqrt{d/\lambda_{r}})$ approximation to the sparsest cut in $d$-regular graphs. In particular, when $\lambda_r \geq \epsilon d$, this gives a $O(r/\sqrt \epsilon)$ approximation for the
\uniformSparsestcut problem. In this regime, our result  gives a slightly better approximation: Assuming $\lambda_r \geq \epsilon d$, if $\phi_{SDP} \leq \epsilon d/100 n$ then $\lambda_r \geq 100 \phi_{SDP}$ yielding an $O(r)$ approximation by \cref{thm:SparsestCut}. Otherwise, if $\phi_{SDP} \geq \epsilon d/100n$, then running a Cheeger rounding on the SDP solution would itself give a cut of sparsity $O(d\sqrt{\epsilon}/n) \leq \phi_{SDP}/\sqrt{\epsilon}$. Thus, the better of our rounding algorithm and a Cheeger rounding on the SDP solution gives a $\max \left\{O(r), 1/\sqrt{\epsilon}\right\}$-approximation to the \uniformSparsestcut problem.

Further, while the Kwok~et al. result is tight with respect to the spectral solution, our approach allows for an improvement in terms of the dependence on $r$ to $\sqrt{r}$, since it uses the SDP relaxation rather than a spectral solution.
%% Preliminaries and Notation
\section{Preliminaries and Notation}\label{sec:prelims}

%We state the necessary notation and definitions formally in this section.

\paragraph*{Sets, Matrices, Vectors:}
We use $[n]=\{1,\ldots,n\}$. For a matrix $X \in \mathbb{R}^{d\times d}$, we say $X\succeq 0$ or $X$ is positive-semidefinite (psd) if $y^TXy \geq 0$ for all $y\in \mathbb{R}^d$. The Gram-matrix of a matrix $M \in \mathbb{R}^{d_1 \times d_2}$ is the matrix $M^T M$, which is psd.

Every matrix $M$ has a singular value decomposition $M=\sum_i \sigma_i u_i v_i^T = UDV^{T}$. Here, the matrices $U, V$ are Unitary, and $D$ is the diagonal matrix of the singular values $\sigma_1 \geq \sigma_2 \geq \ldots \geq \sigma_n$, in non-increasing order. When not clear from context, we denote the singular values of $M$  by $\sigma_i(M)$.

The \emph{Frobenius} norm of $M$ is given by $\fnorm{M} := \sqrt{\sum_i \sigma_i^2(M)} = \sqrt{\sum_{i\in[d_1],j\in[d_2]}M(i,j)^2}$. In our analysis, we will sometimes view a matrix $M$ as a collection of its columns viewed as vectors; $M=(m_j)_{j\in[d_2]}$. In this case, $\fnorm{M}^2=\sum_j \norm{m_j}_2^2$. The spectral norm of $M$ is $\norm{M}_2 := \sigma_1$.

\paragraph*{Rank and Stable Rank:}
The rank of the matrix $M$ (denoted by $\rk{M}$) is the number of non-zero singular values. Recall that the \emph{stable rank} of the matrix $M$, $\sr{M} = \frac{\fnorm{M}^2}{\sigma_1(M)^2}$. Note that $\rk{M} \geq \sr{M}$.

\paragraph*{Metric spaces and embeddings:}
For our purposes, a (semi-)metric space $(\mathcal{X}, d)$  consists of a finite set of points $\mathcal{X}=\{x_1 , x_2, \ldots, x_n\}$ and a distance function $d: \mathcal{X} \times \mathcal{X} \mapsto \R_{\geq0}$ satisfying the following three conditions:
\begin{enumerate}
\item $d(x,x)=0$, $\forall x\in \mathcal{X}$.
\item $d(x,y)=d(y,x)$.
\item (Triangle inequality) $d(x,y)+d(y,z) \geq d(x,z)$.
\end{enumerate}
An \emph{embedding} from a metric space $(\mathcal{X}, d)$ to a metric space $(\mathcal{Y}, d')$ is a mapping $f: \mathcal{X} \rightarrow \mathcal{Y}$. The embedding is called a \emph{contraction}, if
$$d'(f(x_i), f(x_j)) \leq d(x_i, x_j), \qquad\forall x_i, x_j \in \mathcal{X}.$$
For convenience, we will only deal with contractive mappings in this paper.
A contractive mapping is said to have (worst-case) distortion $\Delta$, if $$\sup_{i,j} \frac{d(x_i,x_j)}{d'(f(x_i),f(x_j))} \leq \Delta.$$ It is said to have \emph{average} distortion $\beta$, if
\[
\frac{\sum_{i<j}d(x_i,x_j)}{\sum_{i<j}d(f(x_i),f(x_j))} \leq \beta.
\]
Note that a mapping with worst-case distortion $\Delta$ also has average distortion $\Delta$, but not necessarily vice-versa.
\paragraph*{The $\ell_2^2$  space:}
A set of points $\{x_1, x_2, \ldots, x_n\}\in \R^d$ are said to satisfy $\ell_2^2$ triangle inequality constraints, or said to be in $\ell_2^2$ space, if it holds that  $$\norm{x_{i}-x_{j}}_2^2 +\norm{x_{j}-x_{k}}_2^2\geq \norm{x_{i}-x_{k}}_2^{2}  \qquad \forall i,j,k\in[n].$$  These satisfy the triangle inequalities on the \emph{squares} of their $\ell_2$ distances. The corresponding metric space is $(\mathcal{X},d)$, where $d(i,j):=\norm{x_i-x_j}_2^2$.

\paragraph*{Graphs and Laplacians:}
All graphs will be defined on a vertex set $V$ of size $n$. The vertices will usually be referred to by indices $i,j,k,l \in[n]$. Given a graph with weights on pairs $W:{V \choose 2}\mapsto \mathbb{R}^+$, the graph Laplacian matrix is defined as:
\begin{align*}
 L_W(i,j) :=
 \begin{cases}
 -W(i,j) &\quad \text{if $i\neq j$} \\
 \sum_k W(i,k) &\quad \text{if $i=j$}. \\
 \end{cases}
\end{align*}

\paragraph*{\sparsestcut SDP:}\label{sdp:ARVSDP}
The SDP we use for \sparsestcut on the vertex set $V$ with costs and demands $c_{ij}, d_{kl} \geq 0$  and corresponding cost and demand graphs $C:{V \choose 2}\mapsto \mathbb{R}^+$ and $D:{V\choose 2}\mapsto \mathbb{R}^+$, is effectively the following:
\begin{align*}
\textbf{SDP:}\quad \Phi_{SDP} & := \min \sum_{i<j} c_{ij} \norm{x_{i}-x_{j}}_2^2 \\
&\text{subject to}\begin{cases}\norm{x_{i}-x_{j}}_2^2 +\norm{x_{j}-x_{k}}_2^2\geq \norm{x_{i}-x_{k}}_2^{2} & \forall i,j,k\in[n].\\
{\sum_{k<l} d_{kl} \norm{x_{k}-x_{l}}_2^2} =1.&
\end{cases}
\end{align*}
% While this is technically not an SDP due to the presence of a fraction in the objective, it is not difficult to see that we can construct an equivalent SDP as shown in \cite{GuruswamiS2013}.
Note that the solution to the above SDP is in $ \ell_2 ^2 $ space.
% We will use $\Phi_{ALG}$ to denote the sparsity of the cut produced by an algorithm $ALG$, and will compare it to $ \Phi_{SDP}$.

\paragraph*{$\ell_1$ embeddings and cuts:}
% Leighton and Rao \cite{LeightonR1999} show that instead of producing cuts, it is sufficient to produce an embedding $Z: V\rightarrow \mathbb{R}^d$, with $z_i= Z(i)$, from which we can extract a cut $T$ such that
% \[ \Phi(T) \leq \frac{\sum_{i<j}c_{ij}\Vert z_i - z_j \Vert_1 }{\sum_{k<l}d_{kl}\Vert z_k - z_l \Vert_1 } \].
%\ph{Removed the comment on Leighton-Rao}
Since $\ell_1$ metrics are exactly the cone of cut-metrics, it follows from the previous discussion on embeddings, that producing an embedding of the SDP solutions $\mathcal{X}=\{x_1,\ldots. x_n\}$ in $\ell_2^2$ space to $\ell_1$ space with distortion $\alpha$ would give an $\alpha$-approximation to \sparsestcut. Producing one with \emph{average} distortion $\alpha$ would give an $\alpha$-approximation to \uniformSparsestcut. Furthermore, since $\ell_2$ embeds isometrically (distortion $1$) into $\ell_1$, it suffices to show embeddings into $\ell_2$ for the above purposes.

\paragraph*{Key Lemma:}
The following lemma about $\ell_2^2$ spaces was observed by Deshpande and Venkat~\cite{DeshpandeV2014}. We will reuse this in the rest of the paper.
\begin{lemma}[{\cite[Proposition~1.3]{DeshpandeV2014}}]\label{lemma:key}
Let $x_{1}, x_{2}, \dotsc, x_{n}$ be $n$ points satisfying $\ell_{2}^{2}$ triangle inequalities. Then
\[
\inner{x_{i} - x_{j}}{\frac{x_{k} - x_{l}}{\norm{x_{k} - x_{l}}_{2}}}^{2} \leq \abs{\inner{x_{i} - x_{j}}{x_{k} - x_{l}}} \leq \norm{x_{i} - x_{j}}_{2}^{2}, \quad \forall i, j, k, l\in V.
\]
\end{lemma}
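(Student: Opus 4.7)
The plan is to prove the right inequality $|\inner{x_i-x_j}{x_k-x_l}| \leq \norm{x_i-x_j}_2^2$ directly, and then derive the left inequality as a simple consequence.

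First I would rewrite the inner product as a signed sum of squared distances using the standard polarization-style identity
\[
2\inner{x_i-x_j}{x_k-x_l} = \norm{x_i-x_l}_2^2 + \norm{x_j-x_k}_2^2 - \norm{x_i-x_k}_2^2 - \norm{x_j-x_l}_2^2,
\]
which is straightforward to verify by expanding each squared norm. Given this, the task reduces to bounding the right-hand side in absolute value by $2\norm{x_i-x_j}_2^2$, using only the $\ell_2^2$ triangle inequalities on appropriate triples.

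For the upper bound on the inner product, I would apply the $\ell_2^2$ triangle inequality through the vertex $x_j$ to the two positive terms: $\norm{x_i-x_l}_2^2 \leq \norm{x_i-x_j}_2^2 + \norm{x_j-x_l}_2^2$ and $\norm{x_j-x_k}_2^2 \leq \norm{x_j-x_i}_2^2 + \norm{x_i-x_k}_2^2$. Summing these and substituting, the $\norm{x_j-x_l}_2^2$ and $\norm{x_i-x_k}_2^2$ terms cancel against the negative terms in the identity, leaving $2\inner{x_i-x_j}{x_k-x_l} \leq 2\norm{x_i-x_j}_2^2$. The matching lower bound $-2\inner{x_i-x_j}{x_k-x_l} \leq 2\norm{x_i-x_j}_2^2$ is obtained symmetrically, by instead applying the $\ell_2^2$ triangle inequalities to the two \emph{negative} terms in the identity. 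Together these give $|\inner{x_i-x_j}{x_k-x_l}| \leq \norm{x_i-x_j}_2^2$. Since the inner product is symmetric in the roles of the pair $(i,j)$ and the pair $(k,l)$, the identical argument also yields $|\inner{x_i-x_j}{x_k-x_l}| \leq \norm{x_k-x_l}_2^2$.

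Finally, for the left inequality, I would simply combine the two versions: writing
\[
\inner{x_i-x_j}{\tfrac{x_k-x_l}{\norm{x_k-x_l}_2}}^2 = \frac{\inner{x_i-x_j}{x_k-x_l}^2}{\norm{x_k-x_l}_2^2} \leq \frac{|\inner{x_i-x_j}{x_k-x_l}|\cdot\norm{x_k-x_l}_2^2}{\norm{x_k-x_l}_2^2} = |\inner{x_i-x_j}{x_k-x_l}|,
\]
where the inequality uses the $(k,l)$-version just established. There is no substantive obstacle here; the entire proof is a clean two-line calculation once the polarization identity and the correct choice of intermediate vertex ($x_j$ for the upper bound, $x_i$ or $x_j$ for the lower bound) are selected. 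The only thing to be careful about is picking the right triangle inequalities so that exactly the unwanted terms cancel.
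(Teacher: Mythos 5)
Your proof is correct and follows the standard argument: the polarization identity
\[
2\inner{x_i-x_j}{x_k-x_l} = \norm{x_i-x_l}_2^2 + \norm{x_j-x_k}_2^2 - \norm{x_i-x_k}_2^2 - \norm{x_j-x_l}_2^2
\]
together with two $\ell_2^2$ triangle inequalities through intermediate vertices $x_i$ and $x_j$ (and the symmetric pair for the reverse sign) gives the right inequality, and the left inequality then follows from the $(k,l)$-symmetric version, exactly as you describe. The paper itself does not reprove this lemma (it cites it as Proposition~1.3 of Deshpande--Venkat~2014), and your argument is the one used there; one small prose quibble is that you say you route both triangle inequalities ``through the vertex $x_j$,'' while your second inequality actually uses $x_i$ as the intermediate vertex, but the displayed inequalities are the right ones and the cancellation works out as claimed.
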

An immediate consequence of this lemma is that we can show that a large class of naturally defined $\ell_2^2\hookrightarrow\ell_2$ embeddings are contractions.
\begin{lemma}[Contraction]\label{lem:contraction}
Let $x_{1}, x_{2}, \dotsc, x_{n}$ be $n$ points satisfying $\ell_{2}^{2}$ triangle inequalities. For any probability distribution $\{p_{kl}\}_{k<l}$, let $P$ be the symmetric psd matrix defined as $P := \sum_{k<l} p_{kl}~ (x_{k} - x_{l}) (x_{k} - x_{l})^{T}$. Then the $\ell_{2}^{2}\hookrightarrow\ell_{2}$ embedding given by $x_{i} \mapsto P^{1/2} x_{i}$ is a contraction, that is,
\[
\norm{P^{1/2} (x_{i} - x_{j})}_{2} \leq \norm{x_{i} - x_{j}}_{2}^{2}, \quad \forall i, j\in V.
\]
\end{lemma}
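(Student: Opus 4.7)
The plan is to reduce the contraction bound to a uniform pointwise bound on each rank-one summand contributing to $P$, and then use \cref{lemma:key} directly. Since $P = \sum_{k<l} p_{kl}\, (x_k - x_l)(x_k - x_l)^T$ is a non-negative combination of rank-one psd matrices and hence psd, $P^{1/2}$ is well defined, and the squared length of the image of $x_i - x_j$ expands as
\[
\norm{P^{1/2}(x_i - x_j)}_2^2 \;=\; (x_i - x_j)^T P\, (x_i - x_j) \;=\; \sum_{k<l} p_{kl}\, \inner{x_i - x_j}{x_k - x_l}^2.
\]

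The main step is to bound each inner-product term uniformly by $\norm{x_i - x_j}_2^4$. This follows immediately from the right-hand inequality of \cref{lemma:key}, namely $\abs{\inner{x_i - x_j}{x_k - x_l}} \leq \norm{x_i - x_j}_2^2$ for all $i,j,k,l$. Squaring this bound gives
\[
\inner{x_i - x_j}{x_k - x_l}^2 \;\leq\; \norm{x_i - x_j}_2^4.
\]
Plugging back into the expansion and using $\sum_{k<l} p_{kl} = 1$ yields $\norm{P^{1/2}(x_i - x_j)}_2^2 \leq \norm{x_i - x_j}_2^4$, and taking square roots gives the claimed contraction.

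There is no real obstacle here: the power of \cref{lemma:key} — which upgrades Cauchy–Schwarz to a bound purely in terms of $\norm{x_i - x_j}_2^2$ under $\ell_2^2$ triangle inequalities — makes the argument essentially immediate. The only thing worth noting is that one could alternatively route the argument through the intermediate inequality $\inner{x_i - x_j}{x_k - x_l}^2 \leq \norm{x_k - x_l}_2^2 \cdot \abs{\inner{x_i - x_j}{x_k - x_l}}$ and factor the weighting out over $\|x_k - x_l\|_2^2$; but this extra bookkeeping is unnecessary, since the direct squared bound combined with convexity of the distribution $\{p_{kl}\}$ already delivers the contraction for an \emph{arbitrary} probability distribution, which is exactly the flexibility the later proofs of \cref{thm:goemans-intro,thm:stable-rank} will exploit.
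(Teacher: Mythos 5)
Your proof is correct and follows essentially the same route as the paper's: expand $\norm{P^{1/2}(x_i-x_j)}_2^2 = \sum_{k<l} p_{kl}\inner{x_i-x_j}{x_k-x_l}^2$, bound each summand by $\norm{x_i-x_j}_2^4$ via \cref{lemma:key}, and use $\sum_{k<l}p_{kl}=1$. No discrepancies.
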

\begin{proof}
The following holds for all $i,j$:
\begin{align*}
\norm{P^{1/2} (x_{i} - x_{j})}_{2} & = \left((x_{i} - x_{j})^{T} P (x_{i} - x_{j})\right)^{1/2} \\
& = \left(\sum_{k<l} p_{kl} \inner{x_{i} - x_{j}}{x_{k} - x_{l}}^{2}\right)^{1/2} \\
& \leq \left(\sum_{k<l} p_{kl}~ \norm{x_{i} - x_{j}}_{2}^{4}\right)^{1/2} &[\text{By \cref{lemma:key}}]\\
& = \norm{x_{i} - x_{j}}_{2}^{2}. &[\text{Since $\sum_{k<l} p_{kl} = 1$}]
\end{align*}
\end{proof}
%
% \begin{proof}
%
% \begin{align*}
% \norm{f(x_{i}) - f(x_{j})}_{2} & = \norm{T^{1/2} (x_{i} - x_{j})}_{2} \\
% & = \left((x_{i} - x_{j})^{T} T (x_{i} - x_{j})\right)^{1/2} \\
% & = \left(\sum_{k<l} p_{kl}~ \inner{x_{i} - x_{j}}{x_{k} - x_{l}}^{2}\right)^{1/2} \\
% & \leq \left(\sum_{k<l} p_{kl}~ \norm{x_{i} - x_{j}}_{2}^{4}\right)^{1/2} \qquad \text{by Lemma \ref{lemma:key}} \\
% & = \norm{x_{i} - x_{j}}_{2}^{2}.
% \end{align*}
% \end{proof}

\section{Embedding almost low-dimensional vectors} \label{sec:stable-rank}
We now prove the robust version of Goemans' theorem in terms of stable rank. We give two proofs, and show an application to round solutions to \sparsestcut on low-threshold-rank graphs. As before, given a set of points $x_1, \dotsc, x_n$ in $\R^d$, define their difference matrix $M \in \R^{d\times {n \choose 2}}$ as the matrix with columns as $\{x_i-x_j\}_{i<j}$.
% \begin{theorem} \label{thm:stable-rank-1}
% Let $x_{1}, x_{2}, \dotsc, x_{n} \in \R^{d}$ be points satisfying $\ell_{2}^{2}$ triangle inequalities, and let $M$ be their difference matrix. Then there exists an $\ell_{2}^{2}$-to-$\ell_{2}$ embedding $x_{i} \mapsto h(x_{i})$ with \emph{average} distortion bounded by the stable rank of $M$, that is,
% \begin{align*}
% \norm{h(x_{i}) - h(x_{j})}_{2} & \leq \norm{x_{i} - x_{j}}_{2}^{2}, \quad \text{for all $i, j$}, \\
% \text{and} \qquad \frac{1}{\sr{M}}~ \sum_{i<j} \norm{x_{i} - x_{j}}_{2}^{2} & \leq \sum_{i<j} \norm{h(x_{i}) - h(x_{j})}_{2},
% \end{align*}
% \end{theorem}
\begin{proof}[Proof of {\cref{thm:stable-rank}}]
Let $u$ and $v$ be the top left and right singular vector of $M$, respectively, and $\sigma_{1}\leq\sigma_{2}\leq\dotsc\leq\sigma_{d}$ be the singular values of $M$. Then $Mv = \sigma_{1} u$, or in other words, $\sigma_{1} u = \sum_{k<l} v_{kl} (x_{k} - x_{l})$. Now consider the embedding $x_{i} \mapsto h(x_{i}) = P^{1/2} x_{i}$, where the probability distribution $p_{kl} \propto \abs{v_{kl}}$, that is
\[
P = \sum_{k<l} \frac{\abs{v_{kl}}}{\norm{v}_{1}}~ (x_{k} - x_{l}) (x_{k} - x_{l})^{T}.
\]
This embedding is a contraction by \Cref{lem:contraction}. Now let's bound its average distortion.
\begin{align*}
\sum_{i<j} \norm{h(x_{i}) - h(x_{j})}_{2} & = \sum_{i<j} \norm{P^{1/2} (x_{i} - x_{j})}_{2} \\
& = \sum_{i<j} \left((x_{i} - x_{j})^{T} P (x_{i} - x_{j})\right)^{1/2} \\
& = \sum_{i<j} \left(\sum_{k<l} \frac{\abs{v_{kl}}}{\norm{v}_{1}}~ \inner{x_{i} - x_{j}}{x_{k} - x_{l}}^{2}\right)^{1/2} \\
& \geq \sum_{i<j} \sum_{k<l} \frac{\abs{v_{kl}}}{\norm{v}_{1}}~ \abs{\inner{x_{i} - x_{j}}{x_{k} - x_{l}}} &[\text{By Jensen's inequality}] \\
& \geq \sum_{i<j} \frac{1}{\norm{v}_{1}}~ \abs{\inner{x_{i} - x_{j}}{\sum_{k<l} v_{kl} (x_{k} - x_{l})}} &[\text{By triangle inequality}] \\
& = \frac{1}{\norm{v}_{1}}~ \sum_{i<j} \abs{\inner{x_{i} - x_{j}}{\sigma_{1} u}} \\
& = \frac{1}{\norm{v}_{1}}~ \sum_{i<j} \sigma_{1}^{2} \abs{v_{ij}} \\
& = \sigma_{1}^{2}  = \frac{\fnorm{M}^2}{\sr{M}}\\
& = \frac{1}{\sr{M}}~ \sum_{i<j} \norm{x_{i} - x_{j}}_{2}^{2}.
\end{align*}
\end{proof}

\subsection{An alternative proof}

We can alternatively get the same guarantee as in \cref{thm:stable-rank}, by giving a one-dimensional $\ell_2$ embedding (and hence also $\ell_1$ embedding without any extra effort) along the top singular vector of the difference matrix $M$. This gives an interesting ``spectral'' algorithm that uses spectral information about the point set, akin to spectral algorithms in graphs that use the spectrum of the graph Laplacian.

\begin{theorem} \label{thm:stable-rank-2}
Let $x_{1}, x_{2}, \dotsc, x_{n} \in \R^{d}$ be $n$ points satisfying $\ell_{2}^{2}$ triangle inequalities with $M$ as their difference matrix. Let $u \in \R^{d}$ and $v \in \R^{{n \choose 2}}$ be its top left and right singular vectors, respectively. Then $x_{i} \mapsto \frac{\sigma_{1}}{\norm{v}_{1}} \inner{x_{i}}{u}$ is an $\ell_{2}^{2}\hookrightarrow\ell_{2}$ embedding with average distortion bounded by the stable rank of $M$.
\end{theorem}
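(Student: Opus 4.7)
The plan is to verify two properties of the map $h(x_i) := \frac{\sigma_1}{\norm{v}_1} \inner{x_i}{u}$: (i) it is a contraction with respect to the $\ell_2^2$ distance on the source and (ii) its average distortion is at most $\sr{M}$. Since the map is one-dimensional, $\ell_2$ and $\ell_1$ distances on the image coincide, and this is a genuine $\ell_2^2 \hookrightarrow \ell_2$ embedding.

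For the contraction property, I would first rewrite the embedding by using the singular vector identity $Mv = \sigma_1 u$, which gives $\sigma_1 u = \sum_{k<l} v_{kl}(x_k - x_l)$. Substituting,
\[
h(x_i) - h(x_j) \;=\; \frac{\sigma_1}{\norm{v}_1}\inner{x_i - x_j}{u} \;=\; \frac{1}{\norm{v}_1}\sum_{k<l} v_{kl}\, \inner{x_i - x_j}{x_k - x_l}.
\]
Taking absolute values and applying the triangle inequality, each inner product $\abs{\inner{x_i - x_j}{x_k - x_l}}$ is bounded by $\norm{x_i - x_j}_2^2$ via the Key Lemma (\Cref{lemma:key}), and the weights $\abs{v_{kl}}/\norm{v}_1$ sum to $1$, so contraction follows immediately.

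For the average distortion, the key observation is that the coordinates $\inner{x_i - x_j}{u}$ are exactly the entries of $M^T u$, and the SVD identity gives $M^T u = \sigma_1 v$. Hence $\inner{x_i - x_j}{u} = \sigma_1 v_{ij}$ for all $i<j$, which lets me compute the sum on the nose:
\[
\sum_{i<j}\abs{h(x_i) - h(x_j)} \;=\; \frac{\sigma_1}{\norm{v}_1} \sum_{i<j} \abs{\sigma_1 v_{ij}} \;=\; \sigma_1^2.
\]
Combining with $\sigma_1^2 = \norm{M}_2^2 = \fnorm{M}^2/\sr{M} = \frac{1}{\sr{M}}\sum_{i<j}\norm{x_i-x_j}_2^2$ gives the desired average distortion bound.

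Neither step is really an obstacle: the only subtle point is recognizing that the cancellation-free identity $M^T u = \sigma_1 v$ replaces the Jensen's inequality step used in the proof of \Cref{thm:stable-rank} and saves the $\norm{v}_1$ factor exactly. The contraction argument, as in \Cref{lem:contraction}, is the same expansion in the $(x_k - x_l)(x_k-x_l)^T$ basis, only now with signed weights $v_{kl}$ in place of a probability distribution, handled by the triangle inequality before invoking \Cref{lemma:key}.
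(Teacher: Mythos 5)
Your proof is correct and follows exactly the same approach as the paper's: the contraction step uses the identity $\sigma_1 u = Mv$ together with the triangle inequality and \Cref{lemma:key}, and the average-distortion step uses the dual identity $M^T u = \sigma_1 v$ to compute the sum exactly as $\sigma_1^2$. Your remark that the identity $M^Tu = \sigma_1 v$ replaces the Jensen step of \Cref{thm:stable-rank} is also the right way to see the relationship between the two proofs.
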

\begin{proof}
We have $Mv = \sigma_{1} u$, or equivalently, $\sigma_{1} u = \sum_{k<l} v_{kl} (x_{k} - x_{l})$. Our embedding is a contraction since
\begin{align*}
\frac{\sigma_{1}}{\norm{v}_{1}} \abs{\inner{x_{i} - x_{j}}{u}} & = \frac{1}{\norm{v}_{1}} \abs{\inner{x_{i} - x_{j}}{\sum_{k<l} v_{kl} (x_{k} - x_{l})}} \\
& \leq \frac{1}{\norm{v}_{1}} \sum_{k<l} \abs{v_{kl}} \abs{\inner{x_{i} - x_{j}}{x_{k} - x_{l}}} \\
& \leq \frac{1}{\norm{v}_{1}} \sum_{k<l} \abs{v_{kl}} \norm{x_{i} - x_{j}}_{2}^{2} &[\text{By {\cref{lemma:key}}}]\\
& = \norm{x_{i} - x_{j}}_{2}^{2}.
\end{align*}
Now let's bound the average distortion.
\begin{align*}
\sum_{i<j} \frac{\sigma_{1}}{\norm{v}_{1}} \abs{\inner{x_{i} - x_{j}}{u}} & = \sum_{i<j} \frac{\sigma_{1}}{\norm{v}_{1}} \abs{\sigma_{1} v_{ij}} &[\text{Since $u^{T}M = \sigma_{1} v^{T}$}]\\
& = \sigma_{1}^{2} = \frac{\fnorm{M}^2}{\sr{M}}\\
& = \frac{1}{\sr{M}}~ \sum_{i<j} \norm{x_{i} - x_{j}}_{2}^{2}.
\end{align*}
\end{proof}

\subsection{Application to \sparsestcut on low-threshold rank graphs}

We first state a property of  SDP solutions on low threshold-rank graphs, proved by Guruswami and Sinop~\cite{GuruswamiS2013} using the Von-Neumann inequality.

\begin{prop}[Von-Neumann inequality~{\cite[Theorem~3.3]{GuruswamiS2013}}]\label{prop:laplacian}
Let $0 \leq \lambda_{1} \leq \dotsc \leq \lambda_{m}$ be the generalized
eigenvalues of the Laplacian matrices of the cost and demand graphs. Let
$\sigma_{1} \geq \sigma_{2} \geq \dotsc \geq \sigma_{n} \geq 0$ be the singular vectors of the matrix $M$ with columns $\{\sqrt{d_{ij}} (x_{i} - x_{j})\}_{i<j}$. Then
\[
\frac{\sum_{t \geq r+1} \sigma_{j}^2}{\sum_{t=1}^{n} \sigma_{j}^2} \leq
\frac{\Phi_{SDP}}{\lambda_{r+1}}.
\]
\end{prop}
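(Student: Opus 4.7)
The plan is to re-express both $MM^T$ and $\Phi_{SDP}$ in terms of the Laplacians $L_C$ and $L_D$ acting on the coordinate matrix $X \in \R^{d \times n}$ whose $i$-th column is $x_i$. A direct computation gives $MM^T = \sum_{i<j} d_{ij}(x_i-x_j)(x_i-x_j)^T = X L_D X^T$, so the squared singular values $\sigma_t^2$ are precisely the eigenvalues of $X L_D X^T$. The SDP normalization constraint $\sum_{k<l} d_{kl}\norm{x_k-x_l}_2^2 = 1$ then reads $\tr{X L_D X^T} = \fnorm{M}^2 = \sum_t \sigma_t^2 = 1$, and the objective becomes $\Phi_{SDP} = \tr{X L_C X^T}$. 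The target inequality is thus a bound on $\tr{X L_C X^T}$ in terms of the tail of the eigenvalues of $X L_D X^T$.

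Next I would simultaneously transform the two Laplacian quadratic forms via a change of basis. Working on $\mathbf{1}^\perp$ (where both $L_C$ and $L_D$ are supported, and where $L_D$ is invertible, assuming without loss of generality that the demand graph is connected), let $L_D^{1/2}$ denote the positive square root of $L_D$ on this subspace. Setting $A := L_D^{-1/2} L_C L_D^{-1/2}$ and $B := L_D^{1/2} X^T X L_D^{1/2}$, cyclicity of the trace yields
\[
\Phi_{SDP} = \tr{L_C X^T X} = \tr{(L_D^{-1/2} L_C L_D^{-1/2})(L_D^{1/2} X^T X L_D^{1/2})} = \tr{AB}.
\]
By construction, the eigenvalues of the symmetric matrix $A$ are exactly the generalized eigenvalues $\lambda_1 \leq \lambda_2 \leq \dotsc$ of the pair $(L_C, L_D)$, while the symmetric matrix $B$ has the same nonzero spectrum as $X L_D X^T = MM^T$, namely $\sigma_1^2 \geq \sigma_2^2 \geq \dotsc$. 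The von Neumann (a.k.a.\ Ky Fan) trace inequality, pairing the increasing eigenvalues of $A$ with the decreasing eigenvalues of $B$, then gives
\[
\Phi_{SDP} \;\geq\; \sum_t \lambda_t\, \sigma_t^2 \;\geq\; \lambda_{r+1}\!\!\sum_{t \geq r+1}\!\! \sigma_t^2,
\]
and dividing by $\lambda_{r+1}\cdot\sum_t \sigma_t^2 = \lambda_{r+1}$ delivers the claim.

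The main delicate step is the handling of the kernel of $L_D$: the demand Laplacian is always singular with nullspace containing $\mathbf{1}$ (and larger if the demand graph is disconnected), so one must either assume connectivity without loss of generality and restrict all operators to $\mathbf{1}^\perp$, or replace inverses by Moore--Penrose pseudoinverses and verify that no contribution is lost (which is straightforward because both $L_C$ and $L_D$ annihilate $\mathbf{1}$). With this bookkeeping in place, everything else reduces to a single application of the von Neumann trace inequality, and no further combinatorial input is required.
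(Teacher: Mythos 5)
The paper does not actually prove this proposition; it imports it verbatim as \cite[Theorem~3.3]{GuruswamiS2013}, so there is no in-paper argument to compare against. Your reconstruction is correct and is, as far as I can tell, the standard route: express $MM^T = XL_DX^T$ and $\Phi_{SDP} = \tr{XL_CX^T}$, conjugate by $L_D^{1/2}$ to turn the trace into $\tr{AB}$ with $A$ carrying the generalized spectrum of $(L_C, L_D)$ and $B$ carrying the spectrum of $MM^T$, and then apply the lower-tail von~Neumann pairing (increasing eigenvalues of $A$ against decreasing eigenvalues of $B$) and throw away the first $r$ terms. All the pieces check out: the identity $XL_DX^T = \sum_{i<j} d_{ij}(x_i-x_j)(x_i-x_j)^T$, the normalization $\fnorm{M}^2 = 1$ from the SDP constraint, and the fact that $L_D^{1/2}X^TXL_D^{1/2}$ shares its nonzero spectrum with $XL_DX^T$.

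One small imprecision in your last paragraph is worth tightening. You say that when $L_D$ is singular one should ``verify that no contribution is lost.'' If the demand graph is disconnected, $\ker L_D$ can strictly contain $\operatorname{span}(\mathbf{1})$, and $L_C$ need not annihilate all of it, so some of $\tr{L_CX^TX}$ \emph{is} genuinely lost when you restrict to $\operatorname{range}(L_D)$. What saves the argument is not that nothing is lost but that whatever is lost is nonnegative: $L_C$ and $X^TX$ are both psd, so restricting to $\operatorname{range}(L_D)$ can only decrease the trace, and you end up with $\Phi_{SDP} \geq \tr{AB}$ rather than equality, which is exactly the direction you need. With that one-line fix the pseudoinverse version is airtight. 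Also, since both Laplacians are psd, the eigenvalues of $A$ are nonnegative, which is what lets you safely drop the terms $t \leq r$; you implicitly use this but it deserves a mention.
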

In particular, note that on graphs where $\lambda_r \geq \Phi_{SDP}/(1-\delta)$, $\sum_{i \leq r} \sigma_{i}^2 \geq \delta \sum_{i} \sigma_i^2 $. This implies that $\sr{M} =\sum_i \sigma_i^2/\sigma_1^2 \leq r ~ \cdot ~\sum_i \sigma_i^2/ \sum_{i\leq r} \sigma_i^2 \leq r/\delta$.

We can now modify the proof of \cref{thm:stable-rank-2} to prove \cref{thm:SparsestCut}.

% \begin{theorem} \label{thm:SparsestCut__2}
%  There is a $r/\delta$-approximation algorithm for  \sparsestcut instances $C,D$ satisfying $\lambda_r(C,D) \geq \Phi_{SDP}
% /(1-\delta)$.
% \end{theorem}

\begin{proof}[Proof of {\cref{thm:SparsestCut}}]
Let $x_1, \dotsc , x_n$ be the SDP solution on given instance $C, D$.
We now let $M$ be the matrix with columns $\{\sqrt{d_{kl} }(x_k-x_l)\}_{k<l}$, and  $u,v,\sigma_1$ to be the top left singular vector, top right singular vector, and the maximum singular value respectively of $M$. By the preceding remark, $\sr{M}\leq r/\delta$. The mapping we use is as follows
\[ x_i \mapsto \frac{1}{\sum_{kl} \sqrt{d_{kl}} v_{kl}}\inner{x_i}{u}
\]
The proofs to show contraction and bound the distortion follow exactly as in the proof of \cref{thm:stable-rank-2}. Note that while looking at the distortion, we need to lower bound $\sum_{ij}  d_{ij} \norm{g(x_i)-g(x_j)}_2$.
\end{proof}

As in Deshpande and Venkat~\cite{DeshpandeV2014}, the above algorithm is a fixed polynomial time algorithm and does not grow with the threshold rank unlike the algorithm of Guruswami and Sinop~\cite{GuruswamiS2013} where they use $r$-levels of the Lasserre SDP hierarchy to secure the guarantee. Furthermore, the above analysis improves the guarantee of Deshpande and Venkat~\cite{DeshpandeV2014} by a factor of $O(1/\delta)$.

%%%%%%%%%%%%%%%%%%%%%%%%%%
%%% Goemans' theorem %%%%%
%%%%%%%%%%%%%%%%%%%%%%%%%%
\section{Embedding low-dimensional vectors \`{a} la Goemans} \label{sec:goemans}
% In this section, we show how different probability distributions on
% the difference vectors $(x_{i} - x_{j})$'s used in \lref[Contraction
% Lemma]{lem:contraction} give $\sqrt{d}$ distortion in the worst-case
% (Goemans' theorem) as well as a simpler proof in the \emph{average}
% case.

In this section, we first view the proof of Goemans' theorem in the
framework of \cref{lem:contraction} by giving a
probability distribution using the minimum volume enclosing elliposid   
of the difference vectors $(x_{i}-x_{j})$'s. We then give a simpler
proof, albeit for the {\em average} distortion case, based on a
probability distribution arising from squared-length distribution. Via a well-known duality statement, this technique recovers Goemans' theorem for {\em worst-case} distortion for embeddings into $\ell_1$, although non-constructively. 
 
\subsection{An alternate proof of Goemans' theorem}
Here is an adaptation of the proof from \cite{MagenM2008} re-stated in
our framework. The following proof is arguably simpler and more
straightforward as it works with the difference vectors instead of the
original vectors and their negations.\\

\noindent {\bf \cref{thm:goemans-intro}} (Restated) (Goemans~\cite[Appendix~B]{MagenM2008})
{\em  Let $x_{1}, x_{2}, \dotsc, x_{n} \in \R^{d}$ be $n$ points satisfying $\ell_{2}^{2}$
triangle inequalities. Then there exists an $\ell_{2}^{2}\hookrightarrow\ell_{2}$
embedding $x_{i} \mapsto f(x_{i})$ with distortion $\sqrt{d}$, that is,
\[
\frac{1}{\sqrt{d}}~ \norm{x_{i} - x_{j}}_{2}^{2} \leq \norm{f(x_{i}) -
f(x_{j})}_{2} \leq \norm{x_{i} - x_{j}}_{2}^{2}, \quad \forall~ i, j\in
V.
\]}

% \begin{theorem}[restating Theorem \ref{thm:goemans-intro}] \label{thm:goemans-sec}
% Let $x_{1}, x_{2}, \dotsc, x_{n} \in \R^{d}$ be points satisfying $\ell_{2}^{2}$ triangle inequalities. Then there exists an $\ell_{2}^{2}$-to-$\ell_{2}$ embedding $x_{i} \mapsto f(x_{i})$ with distortion $\sqrt{d}$, that is,
% \[
% \frac{1}{\sqrt{d}}~ \norm{x_{i} - x_{j}}_{2}^{2} \leq \norm{f(x_{i}) - f(x_{j})}_{2} \leq \norm{x_{i} - x_{j}}_{2}^{2}, \quad \text{for all $i, j$}.
% \]
% \end{theorem}
\begin{proof}
Consider all the difference vectors $(x_{i} - x_{j})$'s, and let their
minimum volume enclosing ellipsoid be given by $E := \{x \suchthat
x^{T} Q x \leq 1\}$, for some psd matrix $Q \in \R^{d \times d}$. By John's theorem (or Lagrangian duality for the corresponding convex program), we have $Q^{-1} = \sum_{k<l} \alpha_{kl}~ (x_{k} - x_{l})(x_{k} - x_{l})^{T}$, with all $\alpha_{kl} \geq 0$. Moreover, $\alpha_{kl} \neq 0$ iff $(x_{k} - x_{l})^{T} Q (x_{k} - x_{l}) = 1$. Notice that $d = \tr{\mathbb{I}_{d}} = \tr{Q^{1/2}  Q^{-1}Q^{1/2}} = \sum_{k<l} \alpha_{kl}$. We define the embedding as
\[
f(x_{i}) := \frac{1}{\sqrt{d}} Q^{-1/2} x_{i}.
\]
This embedding is a contraction by \cref{lem:contraction}. We now bound the distortion:
\begin{align*}
\norm{f(x_{i}) - f(x_{j})}_{2} & = \frac{1}{\sqrt{d}}~ \norm{Q^{-1/2} (x_{i} - x_{j})}_{2} \\
& \geq \frac{1}{\sqrt{d}}~ \frac{\norm{x_{i} -
  x_{j}}_{2}^{2}}{\norm{Q^{1/2} (x_{i} - x_{j})}_{2}} & [\text{By Cauchy-Schwarz inequality}] \\
& \geq \frac{1}{\sqrt{d}}~ \norm{x_{i} - x_{j}}_{2}^{2}. &[\text{Since $(x_{i} - x_{j})^{T} Q (x_{i} - x_{j}) \leq 1$, for all $i, j$}]
\end{align*}
\end{proof}

%\subsection{Complexity of Goemans' embedding}
% To make the above embedding algorithmic, one needs to compute the
% minimum volume enclosing ellipsoid (MVEE) first. Computing the MVEE of
% a given set of points $p_{1}, p_{2}, \dotsc, p_{m} \in \R^{d}$ can be
% written as a convex optimization problem:
% $\min\{- \log \deter{Q} | Q \succcurlyeq 0, p_{i}^{T} Q p_{i} \leq 1,
% \forall i \in [m]\}$.
% This can be approximated in polynomial time, and is an
% important step in Khachiyan's ellipsoid algorithm for linear
% programming~\cite{Khachiyan1979,GrotschelLS}. Khachiyan's barycentric
% coordinate descent (or the Frank-Wolfe method) algorithm~\cite{Khachiyan1996} approximates the minimum volume enclosing
% ellipsoid within factor $(1+\eps)$ in time
% $O\left(md^{2} (1/\eps + \log d + \log \log m)\right)$. The subsequent
% improvements based on core-sets~\cite{KumarY2005,ToddY2007} and
% interior point methods~\cite{SunF2004} still need to solve convex
% programs, which makes them prohibitively expensive in practice.

%% Proof sections
\subsection{A simpler proof for average distortion embedding} \label{sec:average}
We now give an average distortion version of Goemans' theorem using a simple squared-length distribution on the difference vectors $(x_{i} - x_{j})$'s in the \cref{lem:contraction}. Interestingly, this can be modified to weighted averages and gives yet another proof of Goemans' worst-case distortion result, although non-constructively.

\begin{theorem} \label{thm:average}
Let $x_{1}, x_{2}, \dotsc, x_{n} \in \R^{d}$ be points satisfying $\ell_{2}^{2}$ triangle inequalities. Then there exists an $\ell_{2}^{2}$-to-$\ell_{2}$ embedding $x_{i} \mapsto g(x_{i})$ with \emph{average} distortion $\sqrt{d}$, that is,
\begin{align*}
\norm{g(x_{i}) - g(x_{j})}_{2} & \leq \norm{x_{i} - x_{j}}_{2}^{2}, \quad \text{for all $i, j$}, \\
\text{and} \qquad \frac{1}{\sqrt{d}}~ \sum_{i<j} \norm{x_{i} - x_{j}}_{2}^{2} & \leq \sum_{i<j} \norm{g(x_{i}) - g(x_{j})}_{2}
\end{align*}
\end{theorem}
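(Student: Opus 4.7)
The plan is to apply \cref{lem:contraction} with the \emph{squared-length} distribution $p_{kl} := \norm{x_k - x_l}_2^2 / S$, where $S := \sum_{k<l} \norm{x_k - x_l}_2^2$, and to define $P := \sum_{k<l} p_{kl}(x_k - x_l)(x_k - x_l)^T$ and $g(x_i) := P^{1/2} x_i$. The contraction bound $\norm{g(x_i) - g(x_j)}_2 \leq \norm{x_i - x_j}_2^2$ is then immediate from \cref{lem:contraction}. The real work is to establish the lower bound $\sum_{i<j} \norm{P^{1/2}(x_i - x_j)}_2 \geq S/\sqrt{d}$.

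For the average lower bound, I would chain three Cauchy--Schwarz inequalities, with the dimension $d$ entering only at the final step. First, the pointwise Cauchy--Schwarz on $\norm{x_i - x_j}_2^2 = \inner{P^{1/2}(x_i - x_j)}{P^{-1/2}(x_i - x_j)}$ gives
\[
\norm{P^{1/2}(x_i - x_j)}_2 \;\geq\; \frac{\norm{x_i - x_j}_2^2}{\norm{P^{-1/2}(x_i - x_j)}_2}.
\]
Second, splitting $\norm{x_i - x_j}_2^2$ as the product of $\sqrt{\norm{x_i - x_j}_2^2 / \norm{P^{-1/2}(x_i - x_j)}_2}$ and $\sqrt{\norm{x_i - x_j}_2^2 \cdot \norm{P^{-1/2}(x_i - x_j)}_2}$, and applying Cauchy--Schwarz to $S = \sum_{i<j} \norm{x_i - x_j}_2^2$, reduces the problem to upper bounding $\sum_{i<j} \norm{x_i - x_j}_2^2 \cdot \norm{P^{-1/2}(x_i - x_j)}_2$ by $S\sqrt{d}$. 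A third Cauchy--Schwarz, via the factorization $\norm{x_i - x_j}_2 \cdot (\norm{x_i - x_j}_2 \cdot \norm{P^{-1/2}(x_i - x_j)}_2)$, in turn reduces this last sum to bounding $\sum_{i<j} \norm{x_i - x_j}_2^2 \cdot (x_i - x_j)^T P^{-1} (x_i - x_j)$.

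This final sum is where the choice $p_{kl} \propto \norm{x_k - x_l}_2^2$ pays off: since $\sum_{i<j} \norm{x_i - x_j}_2^2 (x_i - x_j)(x_i - x_j)^T = S \cdot P$, it evaluates cleanly to $\tr{P^{-1} \cdot S \cdot P} = S \cdot \rk{P} \leq S d$, producing exactly the factor $\sqrt{d}$ after the previous Cauchy--Schwarz. Combining the chain yields $\sum_{i<j} \norm{P^{1/2}(x_i - x_j)}_2 \geq S^2/(S\sqrt{d}) = S/\sqrt{d}$, as desired.

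The only technical nuance I anticipate is handling a possibly rank-deficient $P$: one should interpret $P^{-1}$ as the Moore--Penrose pseudo-inverse on the column span of $P$, which contains every $x_i - x_j$ by construction, so that all quantities above are well defined and $\tr{P^{-1} P}$ simply equals $\rk{P} \leq d$. I do not foresee any deeper obstacle: the argument mirrors the alternate Goemans' proof with the squared-length ellipsoid replacing John's ellipsoid, but the Cauchy--Schwarz chain only gives the average bound, which explains why the later remark about \emph{weighted} averages is needed to recover Goemans' worst-case distortion.
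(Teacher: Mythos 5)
Your proof is correct and takes essentially the same approach as the paper: the same squared-length distribution $p_{kl} \propto \norm{x_k - x_l}_2^2$, the same projective embedding $g(x_i) = P^{1/2}x_i$ with contraction from \cref{lem:contraction}, the same pointwise Cauchy--Schwarz step, and the same trace identity $\sum_{i<j} p_{ij}(x_i - x_j)^T P^{-1}(x_i - x_j) = \tr{P^{-1}P} \le d$, with the same pseudo-inverse caveat for rank-deficient $P$. The only cosmetic difference is that where the paper invokes Jensen's inequality for the convex map $t \mapsto t^{-1/2}$, you chain two additional Cauchy--Schwarz steps; these are equivalent derivations of the power-mean inequality $\sum p_{ij}\, y_{ij}^{-1/2} \ge \bigl(\sum p_{ij}\, y_{ij}\bigr)^{-1/2}$.
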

\begin{proof}
Let $\{p_{kl}\}_{k<l}$ define a probability distribution with $p_{kl} \propto \norm{x_{k} - x_{l}}_{2}^{2}$. Given this distribution, let $P$ be the symmetric psd matrix defined as $P := \sum_{k<l} p_{kl}~ (x_{k} - x_{l}) (x_{k} - x_{l})^{T} \in \R^{d \times d}$. Consider the embedding that maps $x_{i}$ to $g(x_{i}):= P^{1/2} x_{i}$. The embedding is a contraction by the \cref{lem:contraction}.

Now let's bound the average distortion. First, note that:
\begin{align*}
\norm{g(x_{i}) - g(x_{j})}_{2} & =\norm{P^{1/2} (x_{i} - x_{j})}_{2} \geq \frac{\norm{x_{i} - x_{j}}_{2}^{2}}{\norm{P^{-1/2} (x_{i} - x_{j})}_{2}},
\end{align*}
where the inequality follows from the Cauchy-Schwarz inequality.

Summing over all pairs $i,j$ and using the definition of $p_{ij}$ we have
\begin{align*}
\sum_{i<j}\norm{g(x_i)-g(x_j)}_2 & \geq \left( \sum_{k<l} \norm{x_{k} - x_{l}}_{2}^{2}\right)~ \sum_{i<j} \frac{p_{ij}}{\sqrt{(x_{i} - x_{j})^{T} P^{-1} (x_{i} - x_{j})}} \\
& \geq ~ \left( \sum_{k<l} \norm{x_{k} - x_{l}}_{2}^{2}\right) \left(\sum_{i<j} p_{ij}~ (x_{i} - x_{j})^{T} P^{-1} (x_{i} - x_{j})\right)^{-1/2} \\
& \qquad & [\text{by Jensen's inequality}] \\
& = \left( \sum_{k<l} \norm{x_{k} - x_{l}}_{2}^{2}\right)~ \left( \tr{ P^{-1/2} PP^{-1/2}}^{-1/2}\right) \\
& = \left( \sum_{k<l} \norm{x_{k} - x_{l}}_{2}^{2}\right)~ \tr{\mathbb{I}_{d}}^{-1/2} \\
& = \frac{1}{\sqrt{d}}~ \sum_{i<j} \norm{x_{i} - x_{j}}_{2}^{2}.
\end{align*}
We note that if $P$ is not invertible then the same proof can be carried out using pseudo-inverse of $P$ instead.
\end{proof}

\begin{remark}
We note that our embedding (for the average distortion case) is based on a simple squared-length distribution and does not involve computation of minimum volume enclosing ellipsoid \cite{Khachiyan1996} as in the earlier proof.
\end{remark}
\Cref{thm:average} immediately gives an efficient $\sqrt{d}$ approximation algorithm for \uniformSparsestcut when the SDP optimum solution resides in $\R^d$. Furthermore, as we point out next, the same proof can be tweaked to yield a similar result for the general \sparsestcut problem.

%%%Application to sparsest cut.
\begin{theorem}[{\sparsestcut} SDP rounding in dimension $d$]\label{thm:d-dimCut}
A \sparsestcut instance $C,D$ with SDP optimum solution in $\R^d$ has an integrality gap of at most $\sqrt{d}$.
\end{theorem}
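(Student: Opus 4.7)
The plan is to mimic the proof of \cref{thm:average}, modifying the probability distribution so that it accounts for the demand weights, exactly as \cref{thm:SparsestCut} modified \cref{thm:stable-rank-2} to pass from the uniform to the general Sparsest Cut setting. Concretely, let $x_1, \dotsc, x_n$ be an optimal SDP solution lying in $\R^d$, so that $\sum_{k<l} d_{kl}\|x_k-x_l\|_2^2 = 1$ and $\sum_{i<j} c_{ij}\|x_i-x_j\|_2^2 = \Phi_{SDP}$. Define a probability distribution on pairs by $p_{kl} \propto d_{kl}\|x_k-x_l\|_2^2$ (the normalization constant equals $1$ by the SDP constraint), form $P := \sum_{k<l} p_{kl}(x_k-x_l)(x_k-x_l)^T$, and consider the embedding $g(x_i) := P^{1/2}x_i$. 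By \cref{lem:contraction}, this is a contraction; in particular $\sum_{i<j} c_{ij}\|g(x_i)-g(x_j)\|_2 \leq \Phi_{SDP}$.

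The heart of the argument is to show the demand-weighted lower bound
\[
\sum_{i<j} d_{ij}\,\|g(x_i)-g(x_j)\|_2 \;\geq\; \frac{1}{\sqrt{d}}\,\sum_{i<j} d_{ij}\,\|x_i-x_j\|_2^2 \;=\; \frac{1}{\sqrt{d}}.
\]
To prove this, I would apply Cauchy--Schwarz in the same form as in \cref{thm:average}, namely $\|P^{1/2}(x_i-x_j)\|_2 \geq \|x_i-x_j\|_2^2/\|P^{-1/2}(x_i-x_j)\|_2$ (with a pseudo-inverse if $P$ is singular). Multiplying by $d_{ij}$ and substituting $d_{ij}\|x_i-x_j\|_2^2 = p_{ij}$ gives
\[
\sum_{i<j} d_{ij}\|g(x_i)-g(x_j)\|_2 \;\geq\; \sum_{i<j} \frac{p_{ij}}{\sqrt{(x_i-x_j)^T P^{-1}(x_i-x_j)}}.
\]
Convexity of $t \mapsto 1/\sqrt{t}$ together with Jensen's inequality bounds this below by $\bigl(\sum_{i<j} p_{ij} (x_i-x_j)^T P^{-1}(x_i-x_j)\bigr)^{-1/2}$. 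Finally, the inner sum cyclically simplifies via the trace identity $\sum_{i<j} p_{ij}(x_i-x_j)^T P^{-1}(x_i-x_j) = \tr{P^{-1}P} = \tr{\mathbb{I}_d} = d$, yielding the required $1/\sqrt{d}$ lower bound.

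With the contraction and the weighted lower bound in hand, the conclusion is routine: the embedded points give an $\ell_2$ (hence $\ell_1$, by the isometric inclusion $\ell_2 \hookrightarrow \ell_1$) representation whose ratio $\sum c_{ij}\|g(x_i)-g(x_j)\|_2 / \sum d_{ij}\|g(x_i)-g(x_j)\|_2$ is at most $\sqrt{d}\,\Phi_{SDP}$. Since $\ell_1$ metrics are nonnegative combinations of cut metrics, the standard averaging argument extracts a cut of sparsity at most $\sqrt{d}\,\Phi_{SDP}$, establishing the integrality gap bound.

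I do not anticipate a genuine obstacle here, as all three ingredients (Lemma~\ref{lem:contraction} for contraction, Cauchy--Schwarz plus Jensen for the lower bound, and the trace identity) transfer verbatim from \cref{thm:average}. The only subtlety is the replacement of uniform weights by demand weights in the distribution $p_{kl}$, which cleanly absorbs into the identity $d_{ij}\|x_i-x_j\|_2^2 = p_{ij}$ so that the normalization and the Jensen step go through unchanged. Handling a non-invertible $P$ via a pseudo-inverse, as already noted in \cref{thm:average}, is the only minor technicality.
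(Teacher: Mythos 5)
Your proposal matches the paper's own proof exactly: the paper likewise reweights the distribution $p_{kl} \propto d_{kl}\norm{x_k - x_l}_2^2$, embeds via $P^{1/2}$, invokes \cref{lem:contraction} for contraction, and defers the demand-weighted lower bound to ``the same proof goes through'' as in \cref{thm:average}. You have simply written out the Cauchy--Schwarz/Jensen/trace chain that the paper leaves implicit; the details are correct, including the identification $d_{ij}\norm{x_i - x_j}_2^2 = p_{ij}$ via the SDP normalization constraint.
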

\begin{proof}
Let ${x_1, \dotsc x_n}$ be the optimum solution in $\R^d$ to the \sparsestcut SDP. We slightly modify the embedding given in the proof of \cref{thm:average}, by choosing the $p_{ij}$'s based on the demand graph $D$. Let $P = \sum_{k<l} p_{kl}~ (x_{k} - x_{l}) (x_{k} - x_{l})^{T} \in \R^{d \times d}$, where $p_{kl}$'s define a probability distribution with $p_{kl} \propto d_{kl}\norm{x_{k} - x_{l}}_{2}^{2}$. We define the embedding as $x_{i} \mapsto g(x_{i}) = P^{1/2} x_{i}$. \Cref{lem:contraction} shows that it is a contraction. We now need to show $\sum_{i<j} d_{ij} \norm{g(x_i)-g(x_j)}_2 \geq \frac{1}{\sqrt{d}} \sum_{i<j}d_{ij}\norm{x_i-x_j}_2^2$. It is easy to check that the same proof goes through without any major changes.
\end{proof}
By a well-known duality (cf. \cite[Proposition~15.5.2  and Exercise~4]{Matousek2002}), \cref{thm:d-dimCut} also implies Goemans' worst-case distortion result (\cref{thm:goemans-intro}), although non-constructively.

%% Conclusion
%\input{conclusion}

%% Acknowledgements
\subsection*{Acknowledgements}
We thank Luca Trevisan for helpful discussions and suggestions, in particular, for bringing to our attention that Goemans' Theorem was true even with approximate triangle inequalities.

%% Bibliography
{\small
\bibliographystyle{plainurl}
%\bibliography{/Users/prahladh/Documents/LaTeX/papers/jrnl-names-abb,/Users/prahladh/Documents/LaTeX/papers/prahladhbib,/Users/prahladh/Documents/LaTeX/papers/crossref}
\bibliography{DHV-bib}
}

\end{document}